\DeclareMathOperator{\Hom}{Hom}
\DeclareMathOperator{\Com}{Com}
\DeclareMathOperator{\II}{I}
\DeclareMathOperator{\1}{id}
\newcommand{\NN}{\mathbb{N}}
\newcommand{\RR}{\mathbb{R}}
\newcommand{\CC}{\mathbb{C}}
\newcommand{\EEnd}{\mathcal End}
\newcommand{\EE}{\mathcal E}
\newcommand{\bul}{\bullet}
\renewcommand{\=}{:=}
\renewcommand{\t}{\otimes}
\newtheorem{thm}{Theorem}[section]
 \newtheorem{cor}[thm]{Corollary}
\theoremstyle{definition}
 \newtheorem{defn}[thm]{Definition}
\theoremstyle{definition}
 \newtheorem{exam}[thm]{Example}
\theoremstyle{definition}
\begin{document}
\title{\LARGE\bf Note on operadic harmonic oscillator}
\date{}
\author{\Large Eugen Paal and J\"{u}ri Virkepu\\ \\
Department of Mathematics, Tallinn University of Technology\\
Ehitajate tee 5, 19086 Tallinn, Estonia\\ \\
E-mails: eugen.paal@ttu.ee and jvirkepu@staff.ttu.ee}
\maketitle
\thispagestyle{empty}
\begin{abstract}
It is explained how the time evolution of the operadic variables may be introduced.
As an example, an operadic Lax representation for the  harmonic oscillator is constructed.
\par\smallskip
{\bf Keywords:} Operad, harmonic oscillator, operadic Lax pair.
\par
{\bf 2000 MSC:} 18D50, 70G60
\end{abstract}

\section{Introduction and outline of the paper}

It is well known that quantum mechanical observables are linear \emph{operators}, i.e the linear maps $V\to V$ of a vector space $V$ and their time evolution is given by the Heisenberg equation. As a variation of this one can pose the following question \cite{Paal07}: how to describe the time evolution of the  linear algebraic operations (multiplications) $V^{\t n}\to V$. The algebraic operations (multiplications) can be seen as an example of the \emph{operadic} variables \cite{Ger,GGS92,KP,KPS}. 

When an operadic system depends on time one can speak about \emph{operadic dynamics} \cite{Paal07}. 
The latter may be introduced by simple and natural analogy with the Hamiltonian dynamics.
In particular, the time evolution of operadic variables may be given by operadic Lax equation.
As an example, in the present paper, an operadic Lax representation for the harmonic oscillator is constructed.

\section{Operad}

Let $K$ be a unital associative commutative ring, and let $C^n$ ($n\in\NN$) be unital $K$-modules. For  $f\in C^n$, we refer to $n$ as the \emph{degree} of $f$ and often write (when it does not cause confusion) $f$ instead of $\deg f$. For example, $(-1)^f\=(-1)^n$, $C^f\=C^n$ and $\circ_f\=\circ_n$. Also, it is convenient to use the \emph{reduced} degree $|f|\=n-1$. Throughout this paper, we assume that $\t\=\t_K$.

\begin{defn}[operad (e.g \cite{Ger,GGS92})]
A linear (non-symmetric) \emph{operad} with coefficients in $K$ is a sequence $C\=\{C^n\}_{n\in\NN}$ of unital
$K$-modules (an $\NN$-graded $K$-module), such that the following
conditions are held to be true.
\begin{enumerate}
\item[(1)]
For $0\leq i\leq m-1$ there exist \emph{partial compositions}
\[
  \circ_i\in\Hom(C^m\t C^n,C^{m+n-1}),\qquad |\circ_i|=0
\]
\item[(2)]
For all $h\t f\t g\in C^h\t C^f\t C^g$,
the \emph{composition (associativity) relations} hold,
\[
(h\circ_i f)\circ_j g=
\begin{cases}
    (-1)^{|f||g|} (h\circ_j g)\circ_{i+|g|}f
                       &\text{if $0\leq j\leq i-1$},\\
    h\circ_i(f\circ_{j-i}g)  &\text{if $i\leq j\leq i+|f|$},\\
    (-1)^{|f||g|}(h\circ_{j-|f|}g)\circ_i f
                       &\text{if $i+f\leq j\leq|h|+|f|$}.
\end{cases}
\]
\item[(3)]
Unit $\II\in C^1$ exists such that
\[
\II\circ_0 f=f=f\circ_i \II,\qquad 0\leq i\leq |f|
\]
\end{enumerate}
\end{defn}

In the second item, the \emph{first} and \emph{third} parts of the
defining relations turn out to be equivalent.

\begin{exam}[endomorphism operad \cite{Ger}]
\label{HG} Let $V$ be a unital $K$-module and
$\EE_V^n\={\EEnd}_V^n\=\Hom(V^{\t n},V)$. Define the partial compositions
for $f\t g\in\EE_V^f\t\EE_V^g$ as
\[
f\circ_i g\=(-1)^{i|g|}f\circ(\1_V^{\t i}\t g\t\1_V^{\t(|f|-i)}),
         \qquad 0\leq i\leq |f|
\]
Then $\EE_V\=\{\EE_V^n\}_{n\in\NN}$ is an operad (with the unit $\1_V\in\EE_V^1$) called the 
\emph{endomorphism operad} of $V$.

Therefore, algebraic operations can be seen as elements of an endomorphism operad.
\end{exam}

Just as elements of a vector space are called \emph{vectors},  it is natural to call elements of an abstract operad \emph{operations}. The endomorphism operads can be seen as the most suitable objects for modelling operadic systems.

\section{Gerstenhaber brackets and operadic Lax pair} 

\begin{defn}[total composition \cite{Ger,GGS92}]
The \emph{total composition} $\bul\:C^f\t C^g\to C^{f+|g|}$ is defined by
\[
f\bul g\=\sum_{i=0}^{|f|}f\circ_i g\in C^{f+|g|},
\qquad |\bul|=0
\]
The pair $\Com C\=\{C,\bul\}$ is called the \emph{composition algebra} of $C$.
\end{defn}

\begin{defn}[Gerstenhaber brackets \cite{Ger,GGS92}]
The  \emph{Gerstenhaber brackets} $[\cdot,\cdot]$ are defined in $\Com C$ as a graded commutator by
\[
[f,g]\=f\bul g-(-1)^{|f||g|}g\bul f=-(-1)^{|f||g|}[g,f],\qquad|[\cdot,\cdot]|=0 
\]
\end{defn}

The \emph{commutator algebra} of $\Com C$ is denoted as $\Com^{-}\!C\=\{C,[\cdot,\cdot]\}$. 
One can prove that $\Com^-\!C$ is a \emph{graded Lie algebra}. The Jacobi
identity reads
\[
(-1)^{|f||h|}[[f,g],h]+(-1)^{|g||f|}[[g,h],f]+(-1)^{|h||g|}[[h,f],g]=0 
\]

Assume that $K\=\RR$ or $K\=\CC$ and operations are differentiable.
The operadic dynamics may be introduced by the

\begin{defn}[operadic Lax pair \cite{Paal07}]
Allow a classical dynamical system to be described by the evolution equations 
\[
\dfrac{dx_i}{dt}=f_i(x_1,\dots,x_n),\quad i=1,\dots,n
\]
An \emph{operadic Lax pair} is a pair $(L,M)$ of homogeneous operations $L,M\in C$, 
such that the above system of evolution equations may be written as the
\emph{operadic Lax equation}
\[
\dfrac{dL}{dt}=[M,L]\=M\bul L-(-1)^{|M||L|}L\bul M
\]
Evidently,  the degree constraints $|M|=|L|=0$ give rise to ordinary Lax pair \cite{Lax68,BBT03}.
\end{defn}

\section{Operadic harmonic oscillator}

Surprisingly, examples are at hand. One can use the Lax pairs to extend these to operadic area via the operadic Lax equation. 

Consider the Lax pair for the harmonic oscillator:
\[
L=\begin{pmatrix}
p&\omega q\\
\omega q &-p
\end{pmatrix},
\qquad
M=\frac{\omega}{2}
\begin{pmatrix}
0&-1\\
1&0
\end{pmatrix}
\]
Since the Hamiltonian is
\[
H(q,p)=\frac{1}{2}(p^2+\omega^2q^2),
\]
it is easy to check that the Lax equation 
\[
\dot{L}=[M,L]\= ML - LM.
\]
may be written as the Hamiltonian system
\[
\dfrac{dq}{dt}=\dfrac{\partial H}{\partial p}=p,
\quad
\dfrac{dp}{dt}=-\dfrac{\partial H}{\partial q}=-\omega^2q.
\]
If $\mu$ is a homogeneous operadic variable one can use these Hamilton's equations to calculate
\[
\dfrac{d\mu}{dt}
=\dfrac{\partial\mu}{\partial q}\dfrac{dq}{dt}+\dfrac{\partial\mu}{\partial p}\dfrac{dp}{dt}
=p\dfrac{\partial\mu}{\partial q}-\omega^2q\dfrac{\partial\mu}{\partial p}
=[M,\mu]
\]
from which it follows the following linear partial differential equation for the operadic variable $\mu(q,p)$:
\[
p\dfrac{\partial\mu}{\partial q}-\omega^2q\dfrac{\partial\mu}{\partial p}=M\bul\mu- \mu\bul M.
\]
By integrating one gains sequences of operations  called the \emph{operadic (representations of) harmonic oscillator}.

\section{Example}

Let $A\=\{V,\mu\}$ be a  binary algebra with operation $xy\=\mu(x\t y)$. 
We require that $\mu=\mu(q,p)$ so that $(\mu,M)$ is an operadic Lax pair, i.e the operadic Lax equation
\[
\dot{\mu}=[M,\mu]\= M\bul\mu-\mu\bul M,\qquad |\mu|=1,\quad |M|=0
\]
is equivalent to the Hamiltonian system of the harmonic oscillator.

Let $x,y\in V$. By assuming that $|M|=0$ and $|\mu|=1$, one has
\begin{align*}
M\bul\mu
&=\sum_{i=0}^0(-1)^{i|\mu|}M\circ_i\mu
=M\circ_0\mu\\
&=M\circ\mu\\
\mu\bul M
&=\sum_{i=0}^1(-1)^{i|M|}\mu\circ_i M
=\mu\circ_0 M+\mu\circ_1 M\\
&=\mu\circ(M\t\1_V)+\mu\circ(\1_V\t M)
\end{align*}
Therefore, one has
\[
\dfrac{d}{dt}(xy)=M(xy)-(Mx)y-x(My)
\]
Let $\dim V=n$.
In a basis $\{e_1,\ldots,e_n\}$ of $V$,  the structure constants $\mu_{jk}^i$ of $A$ are defined by
\[
\mu(e_j\t e_k)\= \mu_{jk}^i e_i,\qquad j,k=1,\ldots,n
\]
In particular,
\[
\dfrac{d}{dt}(e_je_k)=M(e_je_k)-(Me_j)e_k-e_j(Me_k)
\]
By denoting $Me_i\= M_i^se_s$, it follows that
\[
\dot{\mu}_{jk}^i=\mu_{jk}^sM_s^i-M_j^s\mu_{sk}^i-M_k^s\mu_{js}^i,\qquad i,j,k=1,\ldots, n
\]
From these equations it follows the
\begin{thm}
Let $\dim V=2$.
By identifying 
$
M\=(M_j^i)\=
\frac{\omega}{2}
\left(
\begin{smallmatrix}
0&-1\\
1&0
\end{smallmatrix}
\right)
$, 
the $2$-dimensional binary operadic Lax equations for the harmonic oscillator read
\[
\begin{cases}
\dot{\mu}_{11}^{1}=-\frac{\omega}{2}\left(\mu_{11}^{2}+\mu_{21}^{1}+\mu_{12}^{1}\right),\qquad
\dot{\mu}_{11}^{2}=\frac{\omega}{2}\left(\mu_{11}^{1}-\mu_{21}^{2}-\mu_{12}^{2}\right)\\
\dot{\mu}_{12}^{1}=-\frac{\omega}{2}\left(\mu_{12}^{2}+\mu_{22}^{1}-\mu_{11}^{1}\right),\qquad
\dot{\mu}_{12}^{2}=\frac{\omega}{2}\left(\mu_{12}^{1}-\mu_{22}^{2}+\mu_{11}^{2}\right)\\
\dot{\mu}_{21}^{1}=-\frac{\omega}{2}\left(\mu_{21}^{2}-\mu_{11}^{1}+\mu_{22}^{1}\right),\qquad
\dot{\mu}_{21}^{2}=\frac{\omega}{2}\left(\mu_{21}^{1}+\mu_{11}^{2}-\mu_{22}^{2}\right)\\
\dot{\mu}_{22}^{1}=-\frac{\omega}{2}\left(\mu_{22}^{2}-\mu_{12}^{1}-\mu_{21}^{1}\right),\qquad
\dot{\mu}_{22}^{2}=\frac{\omega}{2}\left(\mu_{22}^{1}+\mu_{12}^{2}+\mu_{21}^{2}\right)\\
\end{cases}
\]
\end{thm}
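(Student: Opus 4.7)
The approach is a direct specialization of the general structure-constant form of the operadic Lax equation,
\[
\dot\mu_{jk}^i = \mu_{jk}^s M_s^i - M_j^s \mu_{sk}^i - M_k^s \mu_{js}^i,
\]
already derived just above the statement, to the two-dimensional case with the prescribed matrix $M$. First I would read off the components of $M$ under the convention $M=(M_j^i)$ with $i$ indexing rows and $j$ indexing columns: the diagonal entries vanish, $M_1^1=M_2^2=0$, while the off-diagonal entries are $M_1^2=\omega/2$ and $M_2^1=-\omega/2$.

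Next, for each of the eight triples $(i,j,k)\in\{1,2\}^3$ I would expand the summation over $s\in\{1,2\}$ in the three terms on the right-hand side. Because $M$ has vanishing diagonal, only the contribution $s\neq i$ survives in $\mu_{jk}^s M_s^i$, only $s\neq j$ survives in $M_j^s\mu_{sk}^i$, and only $s\neq k$ survives in $M_k^s\mu_{js}^i$. Collecting these surviving terms with the appropriate sign ($+\omega/2$ when the nonzero matrix entry is $M_1^2$, $-\omega/2$ when it is $M_2^1$) yields precisely the eight equations in the statement.

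The main obstacle is bookkeeping rather than anything conceptual: one must keep the row/column conventions of $M_j^i$ consistent across the three terms and verify that an overall minus sign appears in the equations for $\dot\mu_{jk}^1$ and a plus sign in those for $\dot\mu_{jk}^2$. A useful sanity check is to carry out one case by hand, e.g.\ $(i,j,k)=(1,1,1)$: here $\dot\mu_{11}^1 = \mu_{11}^2 M_2^1 - M_1^2\mu_{21}^1 - M_1^2\mu_{12}^1 = -\tfrac{\omega}{2}(\mu_{11}^2+\mu_{21}^1+\mu_{12}^1)$, which matches the first equation displayed. The remaining seven cases are identical routine substitutions.
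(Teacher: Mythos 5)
Your proposal is correct and matches the paper's own (essentially unstated) argument: the theorem is obtained exactly by substituting $M_1^2=\omega/2$, $M_2^1=-\omega/2$, $M_1^1=M_2^2=0$ into the structure-constant equation $\dot\mu_{jk}^i=\mu_{jk}^sM_s^i-M_j^s\mu_{sk}^i-M_k^s\mu_{js}^i$ derived just above and running through the eight index triples. Your sample computation for $(i,j,k)=(1,1,1)$ is accurate and the remaining cases are indeed routine.
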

In what follows, consider only  anti-commutative algebras. Then one has the
\begin{cor}
Let $A$ be a $2$-dimensional anti-commutative algebra, i.e
\[
\mu_{11}^{1}=\mu_{22}^{1}=\mu_{11}^{2}=\mu_{22}^{2}=0,\quad
\mu_{12}^{1}=-\mu_{21}^{1},\quad
\mu_{12}^{2}=-\mu_{21}^{2}
\]
Then the operadic Lax equations read
\[
\begin{cases}
\dot{\mu}_{12}^1=-\frac{\omega}{2}{\mu}_{12}^2\\
\dot{\mu}_{12}^2=\hphantom{-}\frac{\omega}{2}{\mu}_{12}^1\\
\end{cases}
\]
\end{cor}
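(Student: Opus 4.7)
The plan is to substitute the anti-commutativity conditions directly into the eight evolution equations supplied by the theorem and verify that the system collapses, in a consistent manner, to the two equations claimed. No new conceptual ingredient is needed: the whole proof is a bookkeeping check that the anti-commutative locus is invariant under the flow and that the induced restricted flow is given by the stated pair of ODEs.

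First I would dispose of the four diagonal components $\dot{\mu}_{11}^{1},\dot{\mu}_{11}^{2},\dot{\mu}_{22}^{1},\dot{\mu}_{22}^{2}$. By the anti-commutativity assumption these structure constants vanish, so the corresponding left-hand sides are zero; I need to confirm that the right-hand sides, after substituting $\mu_{11}^{k}=\mu_{22}^{k}=0$ and $\mu_{21}^{k}=-\mu_{12}^{k}$, also vanish. For instance,
\[
\dot{\mu}_{11}^{1}=-\tfrac{\omega}{2}\bigl(\mu_{11}^{2}+\mu_{21}^{1}+\mu_{12}^{1}\bigr)=-\tfrac{\omega}{2}\bigl(0+(-\mu_{12}^{1})+\mu_{12}^{1}\bigr)=0,
\]
and the other three diagonal equations cancel identically in the same way. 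This establishes that the anti-commutative stratum is preserved by the flow.

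Next I would process the off-diagonal components. Substituting the same relations into the equations for $\dot{\mu}_{12}^{1}$ and $\dot{\mu}_{12}^{2}$ makes the terms $\mu_{22}^{1},\mu_{11}^{1},\mu_{22}^{2},\mu_{11}^{2}$ drop out, leaving
\[
\dot{\mu}_{12}^{1}=-\tfrac{\omega}{2}\mu_{12}^{2},\qquad \dot{\mu}_{12}^{2}=\tfrac{\omega}{2}\mu_{12}^{1},
\]
which are precisely the two equations asserted by the corollary. Finally, I would check the equations for $\dot{\mu}_{21}^{1}$ and $\dot{\mu}_{21}^{2}$; these must be consistent with the previous pair through the relation $\mu_{21}^{k}=-\mu_{12}^{k}$, i.e.\ $\dot{\mu}_{21}^{k}=-\dot{\mu}_{12}^{k}$, and a direct substitution confirms this.

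The only mild obstacle is this last consistency: a priori one might worry that imposing anti-commutativity on a solution of an eight-dimensional system might over-determine it. The calculation above shows that the four diagonal and the two $\mu_{21}^{k}$-equations are implied by, rather than in conflict with, the two remaining equations, so the reduction is genuine and the proof is complete.
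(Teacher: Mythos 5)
Your proof is correct and is exactly the direct-substitution argument the paper intends (the corollary is stated without proof as an immediate consequence of the eight equations in the preceding theorem). Your additional check that the four diagonal equations and the two $\dot{\mu}_{21}^{k}$ equations are consistent with the reduction — i.e.\ that the anti-commutative locus is preserved by the flow — is a worthwhile point the paper leaves implicit, but it does not change the approach.
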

Thus,  one has to specify $\mu_{12}^1$ and $\mu_{12}^2$ as functions of the canonical variables $q$ and $p$.
Define
\[
\begin{cases}
A_+\= \sqrt{\sqrt{2H}+p}\\
A_-\=  \sqrt{\sqrt{2H}-p}\\
\end{cases}
\]
and
\[
\begin{cases}
B_+\= A_++A_-=\sqrt{\sqrt{2H}+p}+\sqrt{\sqrt{2H}-p}\\ 
B_-\= A_+-A_-=\sqrt{\sqrt{2H}+p}-\sqrt{\sqrt{2H}-p}\\
\end{cases}
\]
Then one has 
\begin{thm}
The formulae
\[
M=\frac{\omega}{2}
\begin{pmatrix}
0&-1\\
1&0
\end{pmatrix},
\qquad
\begin{cases}
\mu_{11}^{1}=\mu_{22}^{1}=\mu_{11}^{2}=\mu_{22}^{2}=0\\
{\mu}_{12}^1=-{\mu}_{21}^1=B_-\\
{\mu}_{12}^2=-{\mu}_{21}^2=B_+
\end{cases}
\]
represent a $2$-dimensional binary operadic Lax pair of the harmonic oscillator. 
The algebra given by the above structure functions $\mu_{jk}^i$ is a $2$-dimensional Lie algebra.
\end{thm}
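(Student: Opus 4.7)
The plan is to split the statement into two claims: (i) the given pair $(\mu,M)$ satisfies the operadic Lax equation, and (ii) the resulting binary algebra $(V,\mu)$ is a Lie algebra. For (i), I would first invoke the preceding corollary to reduce the verification of the full system from the previous theorem to the two scalar ODEs
\[
\dot\mu_{12}^1 = -\tfrac{\omega}{2}\,\mu_{12}^2, \qquad \dot\mu_{12}^2 = \tfrac{\omega}{2}\,\mu_{12}^1,
\]
since the prescribed $\mu_{jk}^i$ already satisfy the anti-commutativity hypotheses of that corollary. With $\mu_{12}^1 = B_-$ and $\mu_{12}^2 = B_+$, what remains is to check that along the Hamiltonian flow $\dot B_- = -\tfrac{\omega}{2}B_+$ and $\dot B_+ = \tfrac{\omega}{2}B_-$.

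For the computation I would use only that $H$ is conserved and $\dot p = -\omega^2 q$. Differentiating the defining identity $A_\pm^2 = \sqrt{2H}\pm p$ gives $\dot A_\pm = \mp\omega^2 q/(2A_\pm)$. Taking sum and difference,
\[
\dot B_+ = \frac{\omega^2 q}{2}\cdot\frac{A_+ - A_-}{A_+A_-} = \frac{\omega^2 q}{2}\cdot\frac{B_-}{A_+A_-}, \qquad \dot B_- = -\frac{\omega^2 q}{2}\cdot\frac{B_+}{A_+A_-}.
\]
The decisive algebraic identity is $A_+A_- = \sqrt{2H-p^2} = \sqrt{\omega^2 q^2}$, which on the chart $q>0$ reduces to $\omega q$ and produces exactly the required equations. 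The main subtlety I expect is the handling of the square roots: one must restrict to a region where $\sqrt{2H}\geq|p|$ (automatic from the definition of $H$) and fix a sign branch for $\sqrt{\omega^2 q^2}$, so the construction is genuinely local in phase space. Apart from this choice of branch, the verification is a direct one-line calculation.

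For (ii), anti-commutativity is built into the ansatz, so the only remaining condition is the Jacobi identity. The Jacobiator $J(x,y,z) := [x,[y,z]] + [y,[z,x]] + [z,[x,y]]$ is a totally skew-symmetric trilinear map $V^{\t 3}\to V$, and since $\dim V = 2$ one has $\Lambda^3 V = 0$; hence $J\equiv 0$ automatically and $(V,\mu)$ is a Lie algebra. This part is essentially free once the structure constants are specified, so the entire weight of the proof rests on the analytic step above.
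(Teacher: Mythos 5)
Your verification is correct as far as it goes, but it proves only one implication, while the theorem (read against the paper's definition of an operadic Lax pair and the explicit claim ``is equivalent to the Hamiltonian system'') asserts an equivalence --- and the paper's proof is organized around precisely the direction you omit. You substitute Hamilton's equations into the time derivatives (using $\dot H=0$ and $\dot p=-\omega^2q$ to get $\dot A_\pm=\mp\omega^2q/(2A_\pm)$) and deduce $\dot B_-=-\tfrac{\omega}{2}B_+$, $\dot B_+=\tfrac{\omega}{2}B_-$. The paper instead expands $\dot B_\pm$ by the chain rule as a linear system in the unknowns $(\dot q,\dot p)$ with right-hand sides $-\tfrac{\omega}{2}B_+$ and $\tfrac{\omega}{2}B_-$, and solves it by Cramer's rule, computing $\Delta=4q^2\omega^3/\sqrt{2H}\neq 0$, $\Delta_{\dot q}=4pq^2\omega^3/\sqrt{2H}$, $\Delta_{\dot p}=-4q^3\omega^5/\sqrt{2H}$, hence the unique solution $\dot q=p$, $\dot p=-q\omega^2$. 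The nonvanishing of $\Delta$ is exactly what upgrades your one-way check to the claimed equivalence; to close the gap in your version you need only add that the Jacobian of $(q,p)\mapsto(B_-,B_+)$ is invertible away from $q=0$, so prescribing $\dot B_\pm$ determines $\dot q,\dot p$ uniquely. Your point about branches ($A_+A_-=|\omega q|$, so the construction is local, e.g.\ on the chart $q>0$) is a real subtlety that the paper passes over silently, recording only $(A_+A_-)^2=q^2\omega^2$.

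For the second assertion your route is genuinely different and better: the Jacobiator of an anticommutative product is an alternating trilinear map $V^{\t 3}\to V$, hence vanishes identically when $\dim V=2$ since $\Lambda^3V=0$; the paper merely remarks that the Jacobi identity ``can be checked by direct calculation.'' Your use of the preceding corollary to reduce the eight Lax equations to the two for $\mu_{12}^1,\mu_{12}^2$ matches the paper's setup.
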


\begin{proof}
The operadic Lax equations read
\[
\begin{cases}
\dot{B}_-=-\frac{\omega}{2}B_+\\
\dot{B}_+=\hphantom{-}\frac{\omega}{2}B_-\\
\end{cases}
\]
That is
\[
\begin{cases}
\left[\frac{1}{2A_+}\left(\frac{p}{\sqrt{2H}}+1\right)-\frac{1}{2A_-}\left(\frac{p}{\sqrt{2H}}-1\right)\right]\dot{p}
+\left[\left(\frac{1}{2A_+}-\frac{1}{2A_-}\right)\frac{q\omega^{2}}{\sqrt{2H}}\right]\dot{q}
=-\frac{\omega}{2}B_+\vspace{1mm}\\
\left[\frac{1}{2A_+}\left(\frac{p}{\sqrt{2H}}+1\right)+\frac{1}{2A_-}\left(\frac{p}{\sqrt{2H}}-1\right)\right]\dot{p}+\left[\left(\frac{1}{2A_+}+\frac{1}{2A_-}\right)\frac{q\omega^{2}}{\sqrt{2H}}\right]\dot{q}
=\hphantom{-}\frac{\omega}{2}B_-\\
\end{cases}
\]
Multiplying both equations by $2A_+A_-$ one gets
\[
\begin{cases}
\left[A_-\left(\frac{p}{\sqrt{2H}}+1\right)-A_+\left(\frac{p}{\sqrt{2H}}-1\right)\right]\dot{p}
-\frac{q\omega^{2}B_-}{\sqrt{2H}}\dot{q}
=-\omega B_+A_+A_-\vspace{1mm}\\
\left[A_-\left(\frac{p}{\sqrt{2H}}+1\right)+A_+\left(\frac{p}{\sqrt{2H}}-1\right)\right]\dot{p}
+\frac{q\omega^{2}B_+}{\sqrt{2H}}\dot{q}
=\hphantom{-}\omega B_-A_+A_-\\
\end{cases}
\]
Now use the Cramer formulae. 
By using the relations
\[
B_+^2-B_-^2=4A_+A_-,
\qquad 
(A_+A_-)^2=q^2\omega^2
\]
first calculate the determinants
\begin{align*}
\Delta
&=\begin{vmatrix}
         A_-\left(\frac{p}{\sqrt{2H}}+1\right)-A_+\left(\frac{p}{\sqrt{2H}}-1\right) &
-\frac{q\omega^{2}B_-}{\sqrt{2H}}\vspace{1mm}\\
         A_-\left(\frac{p}{\sqrt{2H}}+1\right)+A_+\left(\frac{p}{\sqrt{2H}}-1\right) &
 \hphantom{-}\frac{q\omega^{2}B_+}{\sqrt{2H}} \\
\end{vmatrix}
=\frac{4q^{2}\omega^{3}}{\sqrt{2H}}\\
\Delta_{\dot{p}}
&=\begin{vmatrix}
-\omega B_+A_+A_-& -\frac{q\omega^2B_-}{\sqrt{2H}}\vspace{1mm} \\
\hphantom{-}\omega B_-A_+A_-& \hphantom{-}\frac{q\omega^2B_+}{\sqrt{2H}} \\
\end{vmatrix}=-\frac{4q^{3}\omega^{5}}{\sqrt{2H}}\\
\Delta_{\dot{q}}
&=\begin{vmatrix}
A_-\left(\frac{p}{\sqrt{2H}}+1\right)-A_+\left(\frac{p}{\sqrt{2H}}-1\right) & -\omega B_+A_+A_- \vspace{1mm}\\
A_-\left(\frac{p}{\sqrt{2H}}+1\right)+A_+\left(\frac{p}{\sqrt{2H}}-1\right) & \hphantom{-}\omega B_-A_+A_-\\
\end{vmatrix}
=\frac{4pq^{2}\omega^{3}}{\sqrt{2H}}
\end{align*}
Thus one obtains the Hamiltonian system of the harmonic oscillator,
\[
\dot{q}=\frac{\Delta_{\dot{q}}}{\Delta}=p,\qquad
\dot{p}=\frac{\Delta_{\dot{p}}}{\Delta}=-q\omega^{2}
\]
and the latter is equivalent to the above operadic Lax system of the harmonic oscillator.

The Jacobi identity for $\mu_{jk}^i$ can be checked by direct calculation.
\end{proof}

\section*{Acknowledgement}
Research was in part supported by the Estonian Science Foundation, Grant 6912.

\end{document}